\newtheorem{Theorem}{Theorem}[section]
\newtheorem{Lemma}[Theorem]{Lemma}
\newtheorem{Remark}[Theorem]{Remark}
\newtheorem{Corollary}[Theorem]{Corollary}
\newtheorem{Example}[Theorem]{Example}
\numberwithin{equation}{section}
\begin{document}
\title{{\LARGE New Bounds for Linear Codes with Applications}}

\author{Liren Lin$^1$, Guanghui~Zhang$^2$, Bocong Chen$^3$ and Hongwei Liu$^4$\footnote{E-mail addresses:
{\it l\underline{~}r\underline{~}lin86@163.com (L. Lin); zghui@squ.edu.cn (G. Zhang);
bocongchen@foxmail.com (B. Chen); hwliu@ccnu.edu.cn (H. Liu).}}
}

\date{\small
$1.$  Hubei Key Laboratory of Applied Mathematics, School of Cyber Science and Technology, Hubei University, Wuhan, Hubei 430062, China\\
$2.$ School of Mathematics and Physics, Suqian University, Suqian, Jiangsu 223800, China\\
$3.$ School of Mathematics, South China University of Technology, Guangzhou 510641, China\\
$4.$ School of Mathematics and Statistics,
Central China Normal University,
Wuhan,   430079, China
}


\maketitle

\begin{abstract}
Bounds on linear codes play a central role in coding theory,
as they capture the fundamental trade-off between error-correction capability (minimum distance)
and information rate (dimension relative to length).
Classical results characterize this trade-off solely in terms of the parameters $n$, $k$, $d$ and $q$.
In this work we derive new bounds under the additional assumption that
the code contains a nonzero codeword of weight $w$.
By combining residual-code techniques with classical results such
as the Singleton and Griesmer bounds,
we obtain explicit inequalities linking $n$, $k$, $d$, $q$ and $w$.
These bounds impose sharper restrictions on admissible codeword weights,
particularly those close to the minimum distance or to the code length.
Applications include refined constraints on the weights of MDS codes,
numerical  restrictions on general linear codes, and excluded weight ranges in the weight distribution.
Numerical comparisons  across standard parameter sets demonstrate that these
$w$-aware bounds strictly enlarge known excluded weight ranges and sharpen structural limitations on linear codes.

\medskip
	
\textbf{Keywords:} Bounds of linear codes, weight distribution, Singleton bound, Griesmer bound, upper bound on the number of nonzero weight.
	
\end{abstract}
	
\section{Introduction}
Let $\mathbb{F}_q$ denote the finite field of order $q$, where $q$ is a prime power.
For a positive integer $n$, we write $\mathbb{F}_q^n$ for the $n$-dimensional  coordinate space  over $\mathbb{F}_q$.
A linear subspace $\mathcal{C}\subseteq \mathbb{F}_q^n$ is called a
 \emph{linear code} over $\mathbb{F}_q$, and its elements are called \emph{codewords}.
The parameter $n$ is referred to as the \emph{length} of $\mathcal{C}$.
As a vector subspace of $\mathbb{F}_q^n$, the code $\mathcal{C}$ has
dimension $k$, called the \emph{dimension} of the code.
The \emph{Hamming distance} between two vectors $\mathbf{x},\mathbf{y}\in\mathbb{F}_q^n$ is the number of coordinate positions on which they differ, and is denoted by $d(\mathbf{x},\mathbf{y})$.
For a vector $\mathbf{x}\in\mathbb{F}_q^n$, the \emph{Hamming weight}, denoted by $\mathrm{wt}(\mathbf{x})$, is the number of nonzero coordinates of $\mathbf{x}$.
Clearly, for any $\mathbf{x},\mathbf{y}\in\mathbb{F}_q^n$ we have
\[
d(\mathbf{x},\mathbf{y}) = \mathrm{wt}(\mathbf{x}-\mathbf{y}).
\]
The \emph{minimum distance} of a code $\mathcal{C}$ is the minimum Hamming distance between any two distinct codewords of $\mathcal{C}$:
\[
d = \min_{\substack{\mathbf{c}_1,\mathbf{c}_2\in \mathcal{C}\\ \mathbf{c}_1\ne \mathbf{c}_2}} d(\mathbf{c}_1,\mathbf{c}_2).
\]
A linear code $\mathcal{C}$ over $\mathbb{F}_q$ with length $n$, dimension $k$, and minimum distance $d$ is called an $[n,k,d]_q$ code.
When the alphabet size $q$ is clear from context, we simply write $[n,k,d]$.

Bounds for linear codes are mathematical relationships that describe the fundamental limits among
 the code length $n$, dimension $k$,   minimum distance $d$ and code alphabet size $q$.
The study of such bounds constitutes a central topic in coding theory,
lying at the intersection of mathematics, computer science, and electrical engineering.
This is not merely a theoretical exercise, but a crucial endeavor that guides the design, analysis, and implementation of efficient and reliable digital communication systems.

One of the most well-known results is the {\em Singleton bound} \cite{Singleton}, which states that for an
  $[n,k,d]$ linear code $\mathcal{C}$ over $\mathbb{F}_q$,
  $$d \leq n - k + 1.$$
A linear code achieving equality in the Singleton bound is called a {\em maximum distance separable (MDS) code}.
A stronger inequality, which takes into account the field size
$q$, is the {\em Griesmer bound} \cite{Griesmer}.
For an $[n,k,d]$ linear code over $\mathbb{F}_q$ with $k\geq 1$, it
asserts that
$$n \geq \sum_{i=0}^{k-1} \left\lceil \frac{d}{q^i} \right\rceil,$$
where $\lceil r\rceil$ denotes the ceiling of a real number  $r$,
i.e.,   the smallest integer greater than or equal to $r$.
A linear code meeting this bound with equality is called a {\em Griesmer code}.
In fact, the Griesmer bound can be regarded as a refinement of the Singleton bound in the linear case.
These inequalities guide code design, benchmark best-known parameters,
and constrain what weight spectra are even feasible (see, e.g., see \cite{huffman2003}).

As coding theory advances, a key direction for strengthening classical bounds involves introducing an additional structural parameter $s$ to the fundamental parameters $n,k,d,q$, thereby deriving refined inequalities that relate all five: $n,k,d,q,s$.
Typical choices of $s$ include:
\begin{itemize}
  \item Covering radius $R$.  Sphere-covering and external-distance arguments give
        $$q^{\,n-k}\le \sum_{i=0}^{R}\binom{n}{i}(q-1)^i$$ and $\lfloor(d-1)/2\rfloor\le R\le s^\ast$,
        relating $R$ to the number $s^\ast$ of nonzero dual weights (see Delsarte \cite{Delsarte}).

  \item Number of nonzero weights. Delsarte's theory and later refinements bound how many distinct nonzero weights a linear (or cyclic/quasi-cyclic) code can have, constraining admissible spectra
        (see \cite{CZ,LCEL,Shi}).

  \item Locality parameters for LRCs. For all-symbol locality $r$,
        $d\le n-k-\lceil k/r\rceil+2$ (see \cite{Go}); for $(r,\delta)$-locality,
        $d\le n-k+1-(\lceil k/r\rceil-1)(\delta-1)$ (see \cite{Pra});
        for small alphabets, alphabet-dependent refinements further tighten the trade-off (see \cite{Ca});
        availability parameters provide additional constraints \cite{Luo25,Ra}.

  \item $w$-weight codewords. Assume that an $[n,k,d]$ linear code
  over $\mathbb{F}_q$ contains a nonzero codeword of weight $w$.
  Chen and Xie \cite{CX} derived a new bound (hereafter referred to as the Chen-Xie bound) on the code parameters involving $n,k,d,q$ and $w$.
 More significantly, they established excluded partial  weight distributions solely from the parameters $n,k,d$
 and $q$ which constitutes the first result of this kind.
\end{itemize}

In this paper we continue the line of investigation initiated in \cite{CX}.
Unlike the approach of \cite{CX}, which relies heavily on the theory of minimal vectors in linear codes \cite{AB},
 our method combines residual-code techniques with classical results such as the Singleton and Griesmer bounds.
 This yields explicit inequalities linking the parameters $n,k,d,q$ with a prescribed weight $w$.
 More explicitly, we have the following results.
\begin{enumerate}
  \item \textbf{Residual-Singleton bound (Theorem~\ref{firstbound}).}
        If an $[n,k,d]$ linear code over $\mathbb{F}_q$ contains a nonzero codeword of weight $w<\tfrac{qd}{q-1}$, then
        \[
          d \;\le\; n-k-\Big\lceil\tfrac{w}{q}\Big\rceil+2.
        \]
        This transforms the knowledge of one codeword weight into a direct tightening of
        the distance bound. The inequality is tight for several standard families illustrated in the paper.

  \item \textbf{Residual-Griesmer bound (Theorem~\ref{secondbound}).}
         If an $[n,k,d]$ linear code over $\mathbb{F}_q$ contains a nonzero codeword of weight $w<\tfrac{qd}{q-1}$ and $k\ge 2$, then
        \[
          n \;\ge\; d+\Big\lceil\tfrac{w}{q}\Big\rceil
               +\sum_{i=1}^{k-2}\Big\lceil \tfrac{\,d-w+\lceil \frac{w}{q}\rceil\,}{q^i}\Big\rceil,
        \]
        thereby quantifying the additional length forced by the presence of a codeword of weight $w$.
        Equality holds for several Griesmer codes given in our examples.

  \item \textbf{Global weight inequality (Theorem~\ref{2n-d}).}
        Every nonzero codeword satisfies $w\le q(n-d)$, which immediately implies $d\le \tfrac{q}{q+1}n$ by taking $w=d$.

  \end{enumerate}

  These results lead to refined constraints on the weights of MDS codes,
  new numerical conditions for general linear codes, and explicit vanishing ranges
  in the weight distribution. We focus especially on excluded weights,
  continuing the perspective of \cite{CX}.
  As direct consequences of Theorems~\ref{firstbound} and \ref{secondbound}, we obtain the following:

        \begin{itemize}
          \item \emph{Singleton Exclusion Criterion} (Theorem~\ref{firstvanishing1}): for $d<w<\frac{qd}{q-1}$, all weights with $w>q(n-k-d+2)$ vanish, yielding a long consecutive interval of forbidden weights.
          \item \emph{Griesmer Exclusion Criterion} (Theorem~\ref{firstvanishing3}): for $d<w<\frac{qd}{q-1}$, all weights with
          \[ n < d + \left\lceil \frac{w}{q} \right\rceil + \sum_{i=1}^{k-2}
\left\lceil \frac{d - w + \left\lceil \frac{w}{q} \right\rceil}{q^i} \right\rceil \]
vanish,
a stronger criterion that can exclude \emph{non-consecutive} weights.
Both theoretical analysis and numerical evidence (see Remark \ref{remark} and Tables 1-3)
demonstrate that our bounds improve upon the results of the Chen-Xie bound  \cite{CX} (see
Lemma \ref{knownresult} below).
\end{itemize}
In particular, we explicitly construct a binary $[11,3,6]$ code
whose set of nonzero weights is exactly $\{6,8\}$.
From the parameters $n,k,d,q$,
the Chen-Xie bound (Lemma~\ref{knownresult} below)
 guarantees the absence of codewords of weights $\{10,11\}$.
Applying Singleton Exclusion Criterion further excludes $\{9,10,11\}$.
Finally, Griesmer Exclusion Criterion establishes that $\{7,9,10,11\}$ vanish,
which is optimal, since the remaining non-excluded weights $\{6,8\}$ exactly match the true spectrum of the code.

This paper is organized as follows.
Section 2 develops the residual-code framework and proves the
$w$-aware Singleton and Griesmer bounds together with the global inequality
$w\leq q(n-d)$. Section 3 applies these results to MDS codes and to excluded weight ranges
(both Singleton- and Griesmer-based), with detailed examples and comparisons.
Section 4 concludes with broader implications and future research directions.

\section{New Bounds for Linear Codes}
We begin with the classical notion of a residual code due to Helgert and Stinaff \cite{HS}.
Let $\mathcal{C}$ be an $[n,k,d]$ linear code over $\mathbb{F}_q$.
For a nonzero codeword $\mathbf{c}\in\mathcal{C}$ with Hamming weight
$w=\mathrm{wt}(\mathbf{c})$, let $\mathcal{I}$ denote the set of coordinates
on which $\mathbf{c}$ is nonzero.
The \emph{residual code} of $\mathcal{C}$ with respect to $\mathbf{c}$,
denoted $\mathrm{Res}(\mathcal{C},\mathbf{c})$, is obtained by puncturing $\mathcal{C}$
at the coordinates in $\mathcal{I}$, and thus has length $n-w$.
The following standard lemma provides a lower bound on the minimum distance of residual codes;
see \cite{HN} or \cite[Theorem~2.7.1]{huffman2003}.

\begin{Lemma}\label{thm:residual}
Let $\mathcal{C}$ be an $[n,k,d]$ linear code
over $\mathbb{F}_q$ and let $\mathbf{c}\in\mathcal{C}$ be a nonzero codeword of weight $w<\frac{qd}{q-1}$.
Then $\mathrm{Res}(\mathcal{C},\mathbf{c})$ is an $[n-w,k-1,d']$ code with
$$
d'\ \ge\ d-w+\Big\lceil \frac{w}{q}\Big\rceil.
$$
\end{Lemma}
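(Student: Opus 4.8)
The plan is to fix an explicit coordinate description. After permuting coordinates I would assume the support of $\mathbf{c}$ is $\mathcal{I}=\{1,\dots,w\}$, so that $\mathrm{Res}(\mathcal{C},\mathbf{c})$ is the image of the linear puncturing map $\pi:\mathcal{C}\to\mathbb{F}_q^{\,n-w}$ that deletes the first $w$ coordinates. The length claim $n-w$ is then immediate, and the whole proof reduces to computing $\dim\ker\pi$ and lower-bounding the weight of a nonzero image.

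The technical engine is a counting (pigeonhole) observation that I would isolate first: for any codeword $\mathbf{u}\in\mathcal{C}$, among the $q$ shifts $\mathbf{u}-\lambda\mathbf{c}$ with $\lambda\in\mathbb{F}_q$, at least one has many zeros inside $\mathcal{I}$. Indeed, for each position $i\in\mathcal{I}$ we have $c_i\neq0$, so the $i$-th coordinate $u_i-\lambda c_i$ vanishes for exactly one value $\lambda=u_i/c_i$. Summing over the $w$ positions, the total number of zero coordinates produced inside $\mathcal{I}$, counted over all $q$ shifts, equals $w$; hence some shift $\mathbf{u}-\lambda_0\mathbf{c}$ has at least $\lceil w/q\rceil$ zeros in $\mathcal{I}$, i.e.\ at most $w-\lceil w/q\rceil$ nonzeros there. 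This single lemma drives both remaining claims.

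For the dimension, note that $\ker\pi$ consists precisely of the codewords supported inside $\mathcal{I}$, and it clearly contains $\langle\mathbf{c}\rangle$. To prove equality I would take any nonzero $\mathbf{x}\in\ker\pi$ and apply the counting observation to obtain $\lambda_0$ with $\mathrm{wt}(\mathbf{x}-\lambda_0\mathbf{c})\le w-\lceil w/q\rceil$, the shift remaining supported in $\mathcal{I}$. The hypothesis $w<\tfrac{qd}{q-1}$ is exactly what forces $w-\lceil w/q\rceil\le w(q-1)/q<d$, so this codeword has weight below $d$ and must therefore be $\mathbf{0}$; thus $\mathbf{x}\in\langle\mathbf{c}\rangle$, giving $\dim\ker\pi=1$ and hence dimension $k-1$. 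For the distance, I would lift a nonzero residual codeword $\bar{\mathbf{v}}=\pi(\mathbf{v})$, apply the counting observation to replace $\mathbf{v}$ by $\mathbf{v}'=\mathbf{v}-\lambda_0\mathbf{c}$ (which has the same image under $\pi$) with at most $w-\lceil w/q\rceil$ nonzeros in $\mathcal{I}$, and then use $\mathrm{wt}(\mathbf{v}')\ge d$ to conclude $\mathrm{wt}(\bar{\mathbf{v}})=\mathrm{wt}(\mathbf{v}')-(\text{nonzeros of }\mathbf{v}'\text{ in }\mathcal{I})\ge d-\bigl(w-\lceil w/q\rceil\bigr)$. The main obstacle is getting the pigeonhole bookkeeping and the precise role of the weight hypothesis right; once the counting observation is in hand, both conclusions follow cleanly and the same argument serves the dimension and the distance.
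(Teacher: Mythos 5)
Your proposal is correct: the pigeonhole count over the $q$ shifts $\mathbf{u}-\lambda\mathbf{c}$ is exactly the right engine, and it correctly delivers both $\ker\pi=\langle\mathbf{c}\rangle$ (via $w-\lceil w/q\rceil\le w(q-1)/q<d$, which is precisely where the hypothesis $w<\frac{qd}{q-1}$ enters) and the distance bound $d'\ge d-w+\lceil w/q\rceil$. The paper does not prove this lemma itself but cites it as standard (Hill--Newton; Huffman--Pless, Theorem 2.7.1), and your argument is essentially that standard textbook proof, so there is nothing to add.
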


In this section, we establish a set of new bounds for linear codes,
which serve as the foundation for deriving sharper constraints on the weights of
 MDS codes, additional numerical conditions for general linear codes,
 and explicit vanishing intervals in their weight spectra.
The key idea is to exploit the weight $w$ of a known nonzero codeword
and pass to the corresponding residual code, to which classical bounds can be applied.
Applying the Singleton bound to residual codes yields the following constraint.

\begin{Theorem}[Residual-Singleton bound]\label{firstbound}
Let $\mathcal{C}$ be an $[n,k,d]$ linear code over $\mathbb{F}_q$
and let $\mathbf{c}$ be a nonzero codeword of weight $w < \frac{qd}{q-1}$. We have
$$d\leq n - k -\left\lceil \frac{w}{q} \right\rceil+ 2. $$
\end{Theorem}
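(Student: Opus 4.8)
The plan is to pass to the residual code and then invoke the classical Singleton bound; the hypothesis $w<\tfrac{qd}{q-1}$ is exactly what makes Lemma~\ref{thm:residual} applicable, so most of the work is already packaged there. First I would apply Lemma~\ref{thm:residual} to the given codeword $\mathbf{c}$ of weight $w$: puncturing $\mathcal{C}$ on the $w$ coordinates supporting $\mathbf{c}$ produces the residual code $\mathrm{Res}(\mathcal{C},\mathbf{c})$, which is an $[n-w,\,k-1,\,d']$ code whose minimum distance obeys
\[
d' \ \ge\ d - w + \Big\lceil \tfrac{w}{q}\Big\rceil .
\]
I record these parameters, noting that the length has dropped from $n$ to $n-w$ and the dimension from $k$ to $k-1$.

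Next I would apply the Singleton bound $d^\ast\le n^\ast-k^\ast+1$ to the residual code itself, with $(n^\ast,k^\ast,d^\ast)=(n-w,\,k-1,\,d')$. This yields the upper estimate
\[
d' \ \le\ (n-w)-(k-1)+1 \ =\ n-w-k+2 .
\]
Chaining this against the lower bound from Lemma~\ref{thm:residual} eliminates $d'$ altogether:
\[
d - w + \Big\lceil \tfrac{w}{q}\Big\rceil \ \le\ d' \ \le\ n-w-k+2 .
\]
The final step is purely algebraic: adding $w$ to both ends cancels the $-w$ terms and gives $d+\lceil w/q\rceil\le n-k+2$, which is precisely the asserted inequality $d\le n-k-\lceil w/q\rceil+2$ after transposing the ceiling term.

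I do not anticipate a genuine analytic obstacle, since the entire strength of the bound is inherited from Lemma~\ref{thm:residual}, which simultaneously supplies the correct length reduction $n\mapsto n-w$ and the sharp lower bound $d-w+\lceil w/q\rceil$ on the residual distance. The only point demanding care is the regime of validity: the minimum distance $d'$ is meaningful only when the residual code is nonzero, i.e.\ when $k-1\ge 1$. I would therefore carry out the argument for $k\ge 2$ (as is done explicitly for the companion Theorem~\ref{secondbound}), where $\mathrm{Res}(\mathcal{C},\mathbf{c})$ is a genuine code to which the Singleton bound legitimately applies.
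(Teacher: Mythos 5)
Your proof is correct and follows exactly the same route as the paper's: apply Lemma~\ref{thm:residual} to get the $[n-w,k-1,d']$ residual code with $d'\ge d-w+\lceil w/q\rceil$, hit it with the classical Singleton bound, and cancel the $-w$ terms. Your closing remark that the argument really requires $k\ge 2$ (so that the residual code is nonzero and has a well-defined minimum distance) is a legitimate point of care that the paper's own statement and proof silently gloss over --- indeed the bound fails for the $[n,1,n]_q$ repetition code with $n>q$ --- so retaining that hypothesis is the right call.
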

\begin{proof}
By Lemma~\ref{thm:residual}, $\mathrm{Res}(\mathcal{C},\mathbf{c})$ has parameters $[n-w,k-1,d']$ with
$d'\ge d-w+\lceil \frac{w}{q}\rceil$.
The Singleton bound for $\mathrm{Res}(\mathcal{C},\mathbf{c})$ gives
\[
(n-w)-(k-1)+1\ \ge\ d'\ge d-w+\Big\lceil \frac{w}{q}\Big\rceil,
\]
whence
\[
n-k+2\ \ge\ d+\Big\lceil\frac{w}{q}\Big\rceil,
\]
as claimed.
\end{proof}

We present several examples to illustrate  Theorem \ref{firstbound}.
\begin{Example}{\rm
	The bound in Theorem~\ref{firstbound} is tight and can be attained by many
  familiar linear codes; see \cite{Ding2018}.
\begin{enumerate}
\item \textbf{Ternary Hamming $[13,10,3]$} (Example~2.37 in \cite{Ding2018}).
This ternary Hamming code has a codeword of weight $4$.
With $q=3$, $n=13$, $k=10$, $d=3$, and   $w=4$, we have
\[
n-k-\Big\lceil \tfrac{w}{q}\Big\rceil+2=13-10-\Big\lceil\tfrac{4}{3}\Big\rceil+2=3=d.
\]

\item \textbf{A ternary $[27,8,14]$ code} (Example~6.50 in \cite{Ding2018}).
This ternary code has weight enumerator $1+810z^{14}+702z^{15}+
1404z^{17}+780z^{18}+2106z^{20}+702z^{21}+54z^{26}+2z^{27}$.
Taking $w=20$ gives
\[
n-k-\Big\lceil \tfrac{w}{q}\Big\rceil+2=27-8-\Big\lceil\tfrac{20}{3}\Big\rceil+2=14=d.
\]

\item \textbf{Binary cyclic $[15,10,4]$} (Example~7.14 in \cite{Ding2018}).
 This binary cyclic code has weight enumerator $1+105z^4+280z^6+435z^8+168z^{10}+35z^{12}$.
With $w=6$, we have
\[
n-k-\Big\lceil \tfrac{w}{q}\Big\rceil+2=15-10-\Big\lceil\tfrac{6}{2}\Big\rceil+2=4=d.
\]
\end{enumerate}
Thus Theorem~\ref{firstbound} can be attained by diverse linear codes.
}
\end{Example}

Using the Griesmer bound on residual codes yields a refined length constraint.
\begin{Theorem}[Residual-Griesmer bound]\label{secondbound}
Let $\mathcal{C}$ be an $[n,k,d] $ linear code over $\mathbb{F}_q$ with $k\geq 2$,
and let $\mathbf{c}$ be a nonzero codeword of weight $w < \frac{qd}{q-1}$. Then we have
		$$n \geq d + \left\lceil \frac{w}{q} \right\rceil + \sum_{i=1}^{k-2} \left\lceil \frac{d - w + \left\lceil \frac{w}{q} \right\rceil}{q^i} \right\rceil.$$
\end{Theorem}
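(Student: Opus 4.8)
The plan is to mirror the proof of Theorem~\ref{firstbound} but substitute the Griesmer bound for the Singleton bound when analyzing the residual code. First I would invoke Lemma~\ref{thm:residual} to conclude that $\mathrm{Res}(\mathcal{C},\mathbf{c})$ is an $[n-w,\,k-1,\,d']$ code with $d'\ge d-w+\lceil w/q\rceil$. Since $k\ge 2$, the residual code has dimension $k-1\ge 1$, so the Griesmer bound applies to it. Writing $\delta := d-w+\lceil w/q\rceil$ for the guaranteed lower bound on $d'$, the Griesmer bound yields
\[
n-w \;\ge\; \sum_{i=0}^{(k-1)-1}\left\lceil \frac{d'}{q^i}\right\rceil
\;\ge\; \sum_{i=0}^{k-2}\left\lceil \frac{\delta}{q^i}\right\rceil,
\]
where the second inequality uses that $d'\ge\delta$ together with monotonicity of the ceiling function in its argument.

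Next I would split off the $i=0$ term of the sum, which equals $\lceil\delta\rceil=\delta=d-w+\lceil w/q\rceil$ since $\delta$ is already an integer (both $d$, $w$, and $\lceil w/q\rceil$ are integers). This gives
\[
n-w \;\ge\; \Big(d-w+\Big\lceil\tfrac{w}{q}\Big\rceil\Big)
        +\sum_{i=1}^{k-2}\left\lceil \frac{\delta}{q^i}\right\rceil.
\]
Adding $w$ to both sides cancels the $-w$ term on the right, leaving
\[
n \;\ge\; d+\Big\lceil\tfrac{w}{q}\Big\rceil
        +\sum_{i=1}^{k-2}\left\lceil \frac{d-w+\lceil w/q\rceil}{q^i}\right\rceil,
\]
which is exactly the claimed inequality. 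Note that when $k=2$ the sum is empty and the bound reduces to $n\ge d+\lceil w/q\rceil$, consistent with Theorem~\ref{firstbound} in that case.

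The one point requiring genuine care, rather than routine bookkeeping, is the monotonicity step: replacing $d'$ by its lower bound $\delta$ inside each ceiling term is valid only because $\lceil x/q^i\rceil$ is nondecreasing in $x$, and I should state this explicitly since the Griesmer bound as quoted is stated for the \emph{actual} minimum distance $d'$, not for a lower bound on it. A second subtlety worth confirming is that the hypothesis $w<\tfrac{qd}{q-1}$ is precisely what Lemma~\ref{thm:residual} needs to guarantee $d'\ge\delta$ (and in particular $\delta\ge 1$, so that the residual code is a genuine code of positive minimum distance to which Griesmer applies); this is inherited directly from the lemma and needs no separate argument. Everything else is elementary index manipulation.
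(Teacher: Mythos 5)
Your proposal is correct and follows essentially the same route as the paper's own proof: apply Lemma~\ref{thm:residual} to get the residual code's parameters, invoke the Griesmer bound on that $[n-w,k-1,d']$ code, substitute the lower bound $d'\ge d-w+\lceil w/q\rceil$ using monotonicity of the ceiling, split off the $i=0$ term, and rearrange. Your explicit attention to the monotonicity step and the empty-sum case $k=2$ is a slight refinement of the paper's presentation but not a different argument.
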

\begin{proof}
	Using Lemma~\ref{thm:residual} again,
the residual code $\text{Res}(\mathcal{C}, \mathbf{c})$ is an $[n-w, k-1, d']$ code with
	\[ d' \geq d - w + \left\lceil \frac{w}{q} \right\rceil. \]
Applying the Griesmer bound to $\text{Res}(\mathcal{C}, \mathbf{c})$, we have
	\[
	n - w \geq \sum_{i=0}^{k-2} \left\lceil \frac{d'}{q^i} \right\rceil.
	\]
Substituting the lower bound on $d'$ and separating the $i=0$ term we obtain the following:
	\[
	n - w \geq \left\lceil d' \right\rceil + \sum_{i=1}^{k-2} \left\lceil \frac{d'}{q^i} \right\rceil \geq \left(d - w + \left\lceil \frac{w}{q} \right\rceil\right) + \sum_{i=1}^{k-2} \left\lceil \frac{d - w + \left\lceil \frac{w}{q} \right\rceil}{q^i} \right\rceil.
	\]
Rearranging terms gives the inequality:
	\[
	n \geq d + \left\lceil \frac{w}{q} \right\rceil + \sum_{i=1}^{k-2} \left\lceil \frac{d - w + \left\lceil \frac{w}{q} \right\rceil}{q^i} \right\rceil,
	\]
which completes the proof.
\end{proof}
We illustrate, using two classical binary linear codes, that the Griesmer-type bound in Theorem~\ref{secondbound} provides a strictly sharper constraint than the Singleton-type bound in Theorem~\ref{firstbound}.

\begin{Example}
	{\rm
	Let $\mathcal{C}$ be the binary linear code with parameters $[15,5,7]$ which is taken from \cite[TABLE II]{CX}.
Consider a minimum-weight codeword of weight $w=7$.
	
	\begin{itemize}
		\item \textbf{Residual-Singleton bound (Theorem \ref{firstbound}):}
		\[
		d \leq n - k - \left\lceil\frac{w}{q}\right\rceil + 2 = 15 - 5 - 4 + 2 = 8.
		\]
		This is \emph{satisfied} ($7 < 8$) but not tight.
		
		\item \textbf{Residual-Griesmer bound (Theorem \ref{secondbound}):}
		\[
		\begin{aligned}
			n &\geq d + \left\lceil\frac{w}{q}\right\rceil + \sum_{i=1}^{k-2}\left\lceil\frac{d - w + \lceil \frac{w}{q} \rceil}{q^{i}}\right\rceil \\
			&= 7 + 4 + \left(2 + 1 + 1\right) = 15.
		\end{aligned}
		\]
		The equality $15 = 15$ shows this bound is \emph{tight}.
	\end{itemize}
	Hence, the \textbf{Residual-Griesmer bound} provides a strictly sharper constraint here.
}
\end{Example}

\begin{Example}
	{\rm
	Let $\mathcal{C}$ be the binary linear code $[31,5,16]$ which is taken from \cite[TABLE II]{CX}, and consider $w=16$.
	
	\begin{itemize}
		\item \textbf{Residual-Singleton bound (Theorem \ref{firstbound}):}
		\[
		d \leq 31 - 5 - \left\lceil\frac{16}{2}\right\rceil + 2 = 31 - 5 - 8 + 2 = 20.
		\]
		Satisfied ($16 < 20$) but not tight.
		
		\item \textbf{Residual-Griesmer bound (Theorem \ref{secondbound}):}
		\[
		\begin{aligned}
			n &\geq 16 + 8 + \sum_{i=1}^{3}\left\lceil\frac{16 - 16 + 8}{2^{i}}\right\rceil \\
			&= 16 + 8 + \left(4 + 2 + 1\right) = 31.
		\end{aligned}
		\]
		The equality $31 = 31$ confirms the bound is \emph{tight}.
	\end{itemize}
	Again, the \textbf{Residual-Griesmer bound} yields a sharper constraint.
}
\end{Example}

The preceding two bounds apply under the hypothesis $w<\tfrac{qd}{q-1}$; in
many applications the relevant $w$ lies close to the minimum distance $d$.
We now consider a scenario where the bound becomes tighter when
$w$ is near the code length
 $n$.

\begin{Theorem}\label{2n-d}
Let $\mathcal{C}$ be an $[n,k,d]$ linear code over
$\mathbb{F}_q$ with $k>1$. If $\mathcal{C}$ contains
 a nonzero codeword of weight $w$,
then
$$
w \le q\,(n-d).
$$
\end{Theorem}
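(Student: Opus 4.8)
The plan is to exploit the hypothesis $k>1$ by forming an affine family of $q$ codewords along the line through a fixed weight-$w$ codeword, and then to double-count nonzero entries coordinate by coordinate. First I would let $\mathbf{c}$ be the given nonzero codeword of weight $w$, with support $\mathcal{I}$ so that $|\mathcal{I}|=w$. Since $k>1$, I can choose a codeword $\mathbf{c}'$ that is linearly independent from $\mathbf{c}$; then every member of the family $\{\mathbf{c}'+\alpha\mathbf{c}:\alpha\in\mathbb{F}_q\}$ is nonzero, for otherwise $\mathbf{c}'$ would be a scalar multiple of $\mathbf{c}$. Consequently each of these $q$ codewords has weight at least $d$, so the total weight summed over the family is at least $qd$.

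Next I would evaluate the same total by summing over coordinates instead of over $\alpha$. For a coordinate $j\in\mathcal{I}$ we have $\mathbf{c}_j\neq0$, so the map $\alpha\mapsto \mathbf{c}'_j+\alpha\mathbf{c}_j$ is a bijection of $\mathbb{F}_q$; exactly one value of $\alpha$ makes this entry vanish, so this coordinate contributes precisely $q-1$ nonzero entries across the family. For a coordinate $j\notin\mathcal{I}$ we have $\mathbf{c}_j=0$, hence $\mathbf{c}'_j+\alpha\mathbf{c}_j=\mathbf{c}'_j$ is constant in $\alpha$, contributing either $0$ or $q$ nonzero entries and in any case at most $q$. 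Summing the contributions, the total weight equals $(q-1)w$ on the support plus at most $q(n-w)$ off the support.

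Combining the two estimates yields $qd\le (q-1)w+q(n-w)=qn-w$, which rearranges to the desired inequality $w\le q(n-d)$. I expect the only delicate point to be the exact count on the support coordinates: one must verify that $\alpha\mapsto \mathbf{c}'_j+\alpha\mathbf{c}_j$ is genuinely a bijection, so that exactly one of the $q$ codewords vanishes at each $j\in\mathcal{I}$, and that the independence of $\mathbf{c}$ and $\mathbf{c}'$ rules out any affine combination being the zero codeword; the remaining steps are routine coordinate bookkeeping. It is worth noting that this argument uses neither the residual-code machinery of Lemma~\ref{thm:residual} nor the restriction $w<\tfrac{qd}{q-1}$, relying only on $k>1$, which is why the conclusion holds for an arbitrary weight $w$.
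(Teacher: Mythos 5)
Your proof is correct and follows essentially the same route as the paper: both form the affine family $\mathbf{u}+\alpha\mathbf{c}$, observe that all $q$ members are nonzero and hence have weight at least $d$, and count nonzero entries coordinatewise (exactly $q-1$ per support coordinate via the bijection $\alpha\mapsto u_j+\alpha c_j$, at most $q$ per coordinate off the support). The only cosmetic difference is that you sum all $q$ weights and compare with $qd$ directly, whereas the paper averages over $\alpha$ to select a single codeword $\mathbf{v}_{\alpha_0}$ with small support overlap; the two bookkeeping schemes are equivalent.
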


\begin{proof}
Let $\mathbf{c}\in\mathcal{C}$ be a nonzero codeword of weight $\mathrm{wt}(\mathbf{c})=w$.
Split the coordinate set $[n]=\{1,\dots,n\}$ into
$$
S=\mathrm{supp}(\mathbf{c})=\big\{i\;\big|\; c_i\ne 0\}\quad\text{and}\quad T=[n]\setminus S=\{i\;|\; c_i=0\big\},
$$
giving $|S|=w$ and $|T|=n-w$.
Because $k>1$, the $1$-dimensional
subspace $\langle \mathbf{c}\rangle=\{\alpha\mathbf{c}\,|\,\alpha\in\mathbb{F}_q\}$
is a proper subset of $\mathcal{C}$.
Hence we can choose $\mathbf{u}\in\mathcal{C}\setminus\langle\mathbf{c}\rangle$.
For each $\alpha\in\mathbb{F}_q$, define the codeword
\[
\mathbf{v}_\alpha = \mathbf{u} + \alpha \mathbf{c}\in\mathcal{C}.
\]
If any $\mathbf{v}_\alpha$ were the zero vector, then $\mathbf{u}=-\alpha\mathbf{c}\in\langle\mathbf{c}\rangle$, contradicting the choice of $\mathbf{u}$. Thus \emph{all} $\mathbf{v}_\alpha$ are nonzero, giving
\begin{equation}\label{eq:lowerd}
\mathrm{wt}(\mathbf{v}_\alpha)\ \ge\ d\qquad\text{for every }\alpha\in\mathbb{F}_q.
\end{equation}
Fix $i\in S$, so $c_i\ne 0$. Consider the map
$$
\mathbb{F}_q\longrightarrow\mathbb{F}_q,\qquad \alpha\mapsto (\mathbf{v}_\alpha)_i \;=\; u_i+\alpha c_i.
$$
Since $c_i\ne 0$, this map is a bijection.
Consequently, as $\alpha$ runs over $\mathbb{F}_q$, the value $(\mathbf{v}_\alpha)_i$
takes the value $0$ for  exactly one  $\alpha$, and is nonzero for the remaining $q-1$ values.
For each fixed $\alpha$, let
$$
N_S(\alpha) = \bigl|\{i\in S\,\big|\,\; (\mathbf{v}_\alpha)_i\ne 0\}\bigr|
  \quad\text{(the number of nonzeros of $\mathbf{v}_\alpha$ on $S$).}
$$
It follows that
$$
N_S(\alpha)=w-\Big|\Big\{i\in S\,\Big|\,u_i+\alpha c_i=0\Big\}\Big|.
$$
Summing over $\alpha\in \mathbb{F}_q$ gives
$$
  \sum_{\alpha\in\mathbb{F}_q} N_S(\alpha)=qw-
  \sum_{\alpha\in \mathbb{F}_q}\Big|\Big\{i\in S\,\Big|\,u_i+\alpha c_i=0\Big\}\Big|.
$$
For each fixed $i\in S$ there is exactly one $\alpha\in \mathbb{F}_q$ such that $u_i+\alpha c_i=0$,
and we have
$$
 \sum_{\alpha\in \mathbb{F}_q}\Big|\Big\{i\in S\,\Big|\,u_i+\alpha c_i=0\Big\}\Big|=|S|=w.
$$
Therefore,
$$
  \sum_{\alpha\in\mathbb{F}_q} N_S(\alpha) \;=\; (q-1)\,|S| \;=\; (q-1)w.
$$
Hence the  average of $N_S(\alpha)$ over $\alpha$ equals $\frac{q-1}{q}w$,
so there exists $\alpha_0\in\mathbb{F}_q$ with
\begin{equation}\label{eq:aupper}
N_S(\alpha_0)\ \le\ \frac{q-1}{q}\,w.
\end{equation}
If $i\in T$ then $c_i=0$, so $(\mathbf{v}_\alpha)_i=u_i$ is independent of $\alpha$.
Let
$$
N_T= \bigl|\{i\in T\,\big|\,\; u_i\ne 0\}\bigr| \;=\; \bigl|\{i\in T\,\big|\,\; (\mathbf{v}_\alpha)_i\ne 0\}\bigr|
$$
for any  $\alpha$.
Trivially, $N_T\le |T|=n-w$.
For each $\alpha$ we have a disjoint decomposition
$$
\mathrm{wt}(\mathbf{v}_\alpha)\;=\;N_S(\alpha)\;+\; N_T.
$$
Apply this at $\alpha_0$ from Equation \eqref{eq:aupper} and use Equation \eqref{eq:lowerd}:
$$
d \ \le\ \mathrm{wt}(\mathbf{v}_{\alpha_0}) \ =\ N_S(\alpha_0) + N_T
\ \le\ \frac{q-1}{q}\,w \;+\; N_T
\ \le\ \frac{q-1}{q}\,w \;+\; (n-w)
\;=\; n-\frac{w}{q}.
$$
Rearranging gives \(w\le q(n-d)\), as claimed.
\end{proof}

\begin{Example}{\rm
Consider the first order binary Reed-Muller code ${\rm RM}(1,4)$,
which has parameters $[16,5,8]$. It is also known that ${\rm RM}(1,4)$ contains a codeword with weight $16$.
By setting $w=16$,  we observe that Theorem \ref{2n-d}  is satisfied with equality.
}
\end{Example}
\section{Applications}
In this section, we apply the bounds developed in Section~2.
The discussion is organized into three parts:
(i) restrictions for MDS codes and general linear codes from Theorems   \ref{firstbound} and \ref{2n-d},
(ii) vanishing weight ranges obtained via Theorem \ref{firstbound},
and (iii) strengthened vanishing weight ranges derived from Theorem \ref{secondbound}.
We substantiate the improvements through numerical comparisons,
showing that our results yield strictly sharper constraints on excluded weights than existing bounds.

\subsection{Applications to MDS codes and general linear codes}
As an immediate consequence of Theorem~\ref{firstbound}, we obtain the following restriction for MDS codes.
\begin{Corollary}\label{mds}
Let $\mathcal{C}$ be an $[n,k,d]$ MDS code over $\mathbb{F}_q$.
For any codeword of $\mathcal{C}$ with weight $w$ satisfying $d\leq w<\frac{qd}{q-1}$,
it must hold that $w\leq q.$
\end{Corollary}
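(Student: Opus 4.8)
The plan is to apply the Residual--Singleton bound of Theorem~\ref{firstbound} directly, exploiting the defining relation of an MDS code. First I would observe that the hypothesis $w<\frac{qd}{q-1}$ is exactly the condition needed to invoke Theorem~\ref{firstbound}, so that bound is available for any codeword whose weight lies in the stated range. It gives
$$
d \;\le\; n-k-\left\lceil \frac{w}{q} \right\rceil+2.
$$

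Next I would substitute the MDS property. Since $\mathcal{C}$ meets the Singleton bound with equality, $d=n-k+1$, equivalently $n-k=d-1$. Inserting this into the displayed inequality yields
$$
d \;\le\; (d-1)-\left\lceil \frac{w}{q} \right\rceil+2 \;=\; d+1-\left\lceil \frac{w}{q} \right\rceil,
$$
and cancelling $d$ from both sides leaves $\left\lceil \frac{w}{q} \right\rceil \le 1$.

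Finally I would convert this ceiling inequality into the claimed bound on $w$. Because the codeword is nonzero and $w\ge d\ge 1$, we have $w\ge 1$, hence $\left\lceil \frac{w}{q} \right\rceil\ge 1$; combined with the previous step this forces $\left\lceil \frac{w}{q} \right\rceil=1$, which holds precisely when $1\le w\le q$. Therefore $w\le q$, as required.

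As for difficulty, there is essentially no genuine obstacle: the corollary is a one-line specialization of Theorem~\ref{firstbound}, obtained by inserting the MDS identity $n-k+1=d$ and simplifying. The only point deserving a moment of care is confirming that the hypothesis $d\le w<\frac{qd}{q-1}$ simultaneously guarantees that Theorem~\ref{firstbound} applies and that $\left\lceil \frac{w}{q} \right\rceil\ge 1$, so that the chain of inequalities collapses cleanly to $\left\lceil \frac{w}{q} \right\rceil=1$ and hence to $w\le q$.
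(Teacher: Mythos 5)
Your proof is correct and follows exactly the route the paper intends: the corollary is stated there as an immediate consequence of Theorem~\ref{firstbound}, obtained by substituting the MDS relation $d=n-k+1$ to get $\left\lceil \frac{w}{q}\right\rceil\le 1$ and hence $w\le q$. Nothing is missing.
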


Once the MDS conjecture is proven, Corollary \ref{mds} will follow trivially.
However, the conjecture remains unresolved. Nevertheless, the corollary may provide additional insights into MDS codes.

Taking $w=d$ in Theorem \ref{2n-d}, we have the next result.
\begin{Corollary}\label{cor:global-weight-ratio}
Let $\mathcal{C}$ be an $[n,k,d]$ linear code over
$\mathbb{F}_q$ with $k>1$. We have
$$
d\leq\frac{q}{q+1}n.
$$
\begin{proof}
Apply Theorem~\ref{2n-d} to a minimum-weight codeword ($w=d$) to obtain
$$
d \le \frac{q}{q+1}\,n.
$$
\end{proof}
\end{Corollary}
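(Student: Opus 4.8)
The plan is to obtain this inequality as a one-step specialization of Theorem~\ref{2n-d}. The hypothesis $k>1$ is exactly the hypothesis of that theorem, so the only input I need is a single nonzero codeword whose weight I can control. By the very definition of the minimum distance, $\mathcal{C}$ contains a nonzero codeword of weight precisely $d$; this is the natural candidate to feed into Theorem~\ref{2n-d}.

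First I would invoke Theorem~\ref{2n-d} with this minimum-weight codeword, taking $w=d$. The theorem gives $w\le q(n-d)$ for any nonzero weight $w$, so the substitution yields $d\le q(n-d)$. The remaining work is purely algebraic: expanding the right-hand side to $qn-qd$, moving the $qd$ term across to obtain $(q+1)d\le qn$, and dividing by $q+1>0$ produces $d\le\frac{q}{q+1}n$, as desired.

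There is essentially no genuine obstacle here, since the statement is a direct corollary; the only point worth checking is that a codeword of weight exactly $d$ truly exists, which is immediate from the definition of $d$ as the smallest nonzero weight of the linear code. I would also note that equality $d=\frac{q}{q+1}n$ forces Theorem~\ref{2n-d} to be tight at $w=d$, which imposes a structural rigidity on the minimum-weight codewords and could serve as a sanity check against known optimal families.
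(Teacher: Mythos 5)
Your proposal is correct and is exactly the paper's argument: apply Theorem~\ref{2n-d} to a minimum-weight codeword with $w=d$ and rearrange $d\le q(n-d)$ into $d\le\frac{q}{q+1}n$. You merely spell out the elementary algebra and the existence of a weight-$d$ codeword, both of which the paper leaves implicit.
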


\begin{Example}{\rm
Let $\mathbb{F}_q=\{\alpha_1, \alpha_2, \cdots,\alpha_q\}$, and consider
the  $[q+1, 2, q]_q$ linear code over $\mathbb{F}_q$ with  generator matrix
$$
\begin{pmatrix}
1 & 1 & \cdots & 1 & 0\\
\alpha_1 & \alpha_2 &  \cdots & \alpha_q & 1
\end{pmatrix}.
$$	
Its parameters satisfy $d=\tfrac{q}{q+1}n$,
showing that equality in Corollary~\ref{cor:global-weight-ratio} can be attained.
}
\end{Example}

\subsection{Excluded weights via Theorem \ref{firstbound}}
The bounds established in Theorems~\ref{firstbound} and~\ref{secondbound} not only provide theoretical relationships among code parameters but also lead directly to powerful practical tools for determining which codeword weights cannot exist in a linear code with given parameters $n, k, d, q$. We formalize these applications into two explicit \emph{exclusion criteria}.

\begin{Theorem}[Singleton Exclusion Criterion]\label{firstvanishing1}
Let $\mathcal{C}$ be an $[n,k,d]$ linear code over $\mathbb{F}_q$ and
let $w$ be a positive integer satisfying $d \leq w < \frac{qd}{q-1}$.
If 	
$$
w > q(n - k - d + 2),
$$
then $\mathcal{C}$ has no codeword of weight $w$.
\end{Theorem}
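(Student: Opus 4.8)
The plan is to derive the Singleton Exclusion Criterion as a direct contrapositive of the Residual-Singleton bound (Theorem~\ref{firstbound}). The hypothesis $d \le w < \frac{qd}{q-1}$ guarantees that, if $\mathcal{C}$ did contain a codeword of weight $w$, then $w$ would satisfy the standing assumption $w < \frac{qd}{q-1}$ required to apply Theorem~\ref{firstbound}. So I would argue by contradiction: suppose $\mathcal{C}$ does contain a nonzero codeword $\mathbf{c}$ of weight $w$. Since $w < \frac{qd}{q-1}$, Theorem~\ref{firstbound} applies and yields
\[
d \;\le\; n - k - \Big\lceil \tfrac{w}{q} \Big\rceil + 2.
\]

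The key step is then to convert this ceiling inequality into the stated polynomial threshold $w > q(n-k-d+2)$. Rearranging the displayed inequality gives $\lceil w/q \rceil \le n-k-d+2$. I would then use the elementary bound $w/q \le \lceil w/q \rceil$ (the ceiling is at least its argument) to conclude $w/q \le n-k-d+2$, i.e.\ $w \le q(n-k-d+2)$. This directly contradicts the assumed hypothesis $w > q(n-k-d+2)$, so no such codeword can exist.

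The only subtlety worth flagging is the direction of the ceiling estimate. In passing from the discrete conclusion $\lceil w/q\rceil \le n-k-d+2$ to the real-valued statement $w/q \le n-k-d+2$, I must use $w/q \le \lceil w/q\rceil$, which always holds; this is the harmless direction, so there is no rounding loss working against me. I do not need the reverse inequality anywhere. Thus the passage is clean and the contradiction is immediate; the whole argument is essentially a one-line rearrangement of Theorem~\ref{firstbound} combined with the trivial ceiling estimate, with the weight-range hypothesis serving only to license the application of that theorem.

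I expect no real obstacle here, since the statement is a formal corollary of a bound already proved. The main thing to state carefully is that the hypothesis $d \le w$ is not used in the derivation of the contradiction itself but anchors $w$ in the relevant range so that the criterion is meaningful (excluding weights at or above $d$ and below $\frac{qd}{q-1}$), and the upper constraint $w < \frac{qd}{q-1}$ is exactly what makes Theorem~\ref{firstbound} invokable.
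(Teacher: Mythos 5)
Your proposal is correct and is essentially identical to the paper's own proof: both argue by contradiction, invoke Theorem~\ref{firstbound}, rearrange to $\lceil w/q\rceil \le n-k-d+2$, and use $w/q \le \lceil w/q\rceil$ to obtain $w \le q(n-k-d+2)$, contradicting the hypothesis. No differences worth noting.
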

\begin{proof}
Assume, for contradiction, that such a codeword exists.
By Theorem~\ref{firstbound},
$$d\leq n-k-\left\lceil \frac{w}{q} \right\rceil+2.$$
Therefore
$$\frac{w}{q}\leq \left\lceil \frac{w}{q} \right\rceil \leq n-k-d+2,$$
which implies
$$w\leq q(n-k-d+2),$$
a contradiction.
\end{proof}

To highlight the improvement, we compare our exclusion criteria with the Chen-Xie bound~\cite{CX}, which is state of the art in deducing vanishing intervals from $n,k,d$ and $q$.
For completeness, we restate their main result below.
\begin{Lemma}{\rm(Chen-Xie bound \cite{CX})}\label{knownresult}
	Let $\mathcal{C}$ be an $[n,k,d]$ linear code over $\mathbb{F}_q$.
If  an integer $v\geq0$ satisfies
	$$
	n-k+2<\frac{qd}{q-1}-v,
	$$
	then $\mathcal{C}$ has no codeword with weight $w$ in the interval  $$ \frac{qd}{q-1}-v-1\leq w \leq  \frac{qd}{q-1}-1,$$
	i.e. these weights are excluded.
\end{Lemma}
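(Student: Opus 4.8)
The plan is to obtain this restatement of the Chen-Xie bound as a direct corollary of the Residual-Singleton bound (Theorem~\ref{firstbound}), which is already available, rather than retracing the minimal-vector argument of \cite{CX,AB}. Arguing by contradiction, I would suppose that $\mathcal{C}$ has a codeword of weight $w$ with $\frac{qd}{q-1}-v-1\le w\le\frac{qd}{q-1}-1$. The decisive point is that the whole interval lies strictly below $\frac{qd}{q-1}$, so every admissible $w$ meets the hypothesis of Theorem~\ref{firstbound}; that theorem gives $d\le n-k-\lceil w/q\rceil+2$, and discarding the ceiling via $\lceil w/q\rceil\ge w/q$ yields the clean inequality $w\le q(n-k-d+2)$. (Equivalently, this is Theorem~\ref{firstvanishing1} read uniformly across the interval, noting that any nonzero codeword weight is automatically at least $d$.)

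Next I would combine this with the two numerical inputs. Since $(q-1)(n-k+2)$ is an integer, the strict hypothesis $n-k+2<\frac{qd}{q-1}-v$ sharpens to $(q-1)(n-k+2)\le qd-(q-1)v-1$. Multiplying $w\le q(n-k+2)-qd$ by $q-1$ and inserting this sharpened bound, the leading terms combine to $qd$ and leave the upper estimate $(q-1)w\le qd-q(q-1)v-q$. The left endpoint of the interval supplies the exact lower estimate $(q-1)w\ge qd-(q-1)v-(q-1)$. Comparing the two, the $qd$ terms cancel and the whole system collapses to $(q-1)^2v\le -1$, which is impossible for $q\ge 2$ and $v\ge 0$; this contradiction proves the lemma.

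The step deserving the most care, and the reason I sharpen the hypothesis to its integer form at the outset, is the boundary case $v=0$. Over the reals the same chain only yields $(q-1)^2v\le q-2$, which is vacuous at $v=0$; the excluded ``interval'' is then the single weight $w=\frac{qd}{q-1}-1$, meaningful only when $\frac{qd}{q-1}\in\mathbb{Z}$, where the slack is exactly $\tfrac1q$. The one extra unit gained from integrality of $(q-1)(n-k+2)$ is precisely what recovers this slack and lets a single computation settle all $v\ge 0$ simultaneously; a more pedestrian alternative is to treat $v\ge 1$ by the real-valued estimate (since then $(q-1)v\ge 1$) and dispatch $v=0$ on its own. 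I would also remark that this derivation exhibits the Chen-Xie bound as a consequence of Theorem~\ref{firstbound}, consistent with the fact that our exclusion criteria subsequently strengthen it.
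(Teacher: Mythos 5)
Your argument is correct, but it is worth noting that the paper does not prove this lemma at all: it is quoted verbatim from \cite{CX} ``for completeness,'' and the original proof there goes through the theory of minimal vectors \cite{AB}. What you have done instead is derive the Chen--Xie bound as a corollary of Theorem~\ref{firstbound}, and the computation checks out: for any integer weight $w$ in the stated interval one has $w<\frac{qd}{q-1}$, so Theorem~\ref{firstbound} gives $w\le q(n-k-d+2)$; the integrality sharpening $(q-1)(n-k+2)\le qd-(q-1)v-1$ then yields $(q-1)w\le qd-q(q-1)v-q$, which against the lower estimate $(q-1)w\ge qd-(q-1)v-(q-1)$ collapses to $(q-1)^2v\le -1$, a contradiction. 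You are also right that the purely real-valued chain only gives $(q-1)^2v<q-1$ and therefore fails to cover $v=0$, so the one unit gained from integrality is genuinely needed (the degenerate case where the interval's lower endpoint is nonpositive cannot occur, since it would force $n<k-1$). This route buys something the paper only asserts informally: Remark~\ref{remark} claims the Singleton Exclusion Criterion ``consistently yields strictly stronger exclusions'' than Lemma~\ref{knownresult} by comparing interval endpoints, and your derivation is precisely a rigorous proof that the Chen--Xie excluded interval is contained in the Singleton one, i.e.\ that Lemma~\ref{knownresult} is logically subsumed by Theorem~\ref{firstbound}. The trade-off is that your proof is not independent of the paper's machinery (it presupposes the residual-code bound), whereas the original Chen--Xie argument is self-standing; but as a way of making the paper self-contained and of substantiating the comparison in Remark~\ref{remark}, your version is arguably more informative than the bare citation.
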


\begin{Remark}{\rm \label{remark}
		The Chen-Xie bound excludes the interval $[n-k+2, \left\lfloor\frac{qd}{q-1}\right\rfloor-1]$,
		whereas the Singleton Exclusion Criterion excludes $[q(n-k-d+2)+1, \left\lceil\frac{qd}{q-1}\right\rceil-1]$.
		Since it is obvious that the right endpoints satisfy that
$\left\lfloor\frac{qd}{q-1}\right\rfloor-1\leq \left\lceil\frac{qd}{q-1}\right\rceil-1$,
the comparison reduces to the left endpoints.
		The feasibility condition $(q-1)(n-k+2) < qd$ in the Chen-Xie bound implies $q(n-k-d+2) < n-k+2$. Therefore,
		 the Singleton Exclusion Criterion consistently yields strictly stronger exclusions for all codes satisfying this condition, as it excludes a larger interval of weights. This superiority is also clearly illustrated by the comparative data presented in Tables 1 and 2.
The data in Table 1 are taken from \cite[TABLE II]{CX}, whereas those in Table 2 are taken from \cite[TABLE III]{CX}.
For example, refer to the last row of Table 1, which presents a binary code with parameters $[93,5,48]_2$.
The data in the last row and second column showed that
$A_{93} = A_{92} = A_{91} = A_{90} = 0$, indicating that four weights vanish.
In contrast, applying our result from Theorem \ref{firstvanishing1}, we conclude that
$A_{93} = A_{92} =\cdots= A_{88} = A_{87} = 0$, meaning that seven weights vanish,
as shown in the last row and the final column of Table 1.
}
\end{Remark}

\begin{longtable}{|p{85.9pt}|p{100.4pt}|p{160.4pt}|}
	\caption{Comparison of excluded weight ranges for binary linear codes: Chen-Xie bound vs. Singleton Exclusion Criterion} \\
	\hline
	\textbf{Code parameters} & \textbf{Excluded weights} & \textbf{Excluded weights} \\
	$[n,k,d]_2$ & \textbf{(Chen-Xie bound)} & \textbf{(Singleton  Criterion)} \\
	\hline
	\endfirsthead
	
	\caption[]{Comparison of excluded weight ranges for binary linear codes: Chen-Xie bound vs. Singleton Exclusion Criterion (continued)} \\
	\hline
	\textbf{Code parameters} & \textbf{Excluded weights} & \textbf{Excluded weights} \\
	$[n,k,d]_2$ & \textbf{(Chen-Xie bound)} & \textbf{(Singleton  Criterion)} \\
	\hline
	\endhead
	
	\hline
	\endfoot
	
	\hline
	\endlastfoot
	
	$[15, 5, 7]_2$ & $13, 12$ & $13, 12, 11$ \\
	& (2 weights) & (3 weights) \\
	\hline
	$[21, 9, 8]_2$ & $15, 14$ & $15, 14, 13$ \\
	& (2 weights) & (3 weights) \\
	\hline
	$[31, 5, 16]_2$ & $31, 30, 29, 28$ & $31, 30, 29, 28, 27, 26, 25$ \\
	& (4 weights) & (7 weights) \\
	\hline
	$[32, 6, 16]_2$ & $31, 30, 29, 28$ & $31, 30, 29, 28, 27, 26, 25$ \\
	& (4 weights) & (7 weights) \\
	\hline
	$[47, 5, 24]_2$ & $47, 46, 45, 44$ & $47, 46, 45, 44, 43, 42, 41$ \\
	& (4 weights) & (7 weights) \\
	\hline
	$[48, 6, 24]_2$ & $47, 46, 45, 44$ & $47, 46, 45, 44, 43, 42, 41$ \\
	& (4 weights) & (7 weights) \\
	\hline
	$[55, 5, 28]_2$ & $55, 54, 53, 52$ & $55, 54, 53, 52, 51, 50, 49$ \\
	& (4 weights) & (7 weights) \\
	\hline
	$[56, 6, 28]_2$ & $55, 54, 53, 52$ & $55, 54, 53, 52, 51, 50, 49$ \\
	& (4 weights) & (7 weights) \\
	\hline
	$[59, 5, 30]_2$ & $59, 58, 57, 56$ & $59, 58, 57, 56, 55, 54, 53$ \\
	& (4 weights) & (7 weights) \\
	\hline
	$[60, 6, 30]_2$ & $59, 58, 57, 56$ & $59, 58, 57, 56, 55, 54, 53$ \\
	& (4 weights) & (7 weights) \\
	\hline
	$[61, 5, 31]_2$ & $61, 60, 59, 58$ & $61, 60, 59, 58, 57, 56, 55$ \\
	& (4 weights) & (7 weights) \\
	\hline
	$[62, 6, 31]_2$ & $61, 60, 59, 58$ & $61, 60, 59, 58, 57, 56, 55$ \\
	& (4 weights) & (7 weights) \\
	\hline
	$[63, 5, 32]_2$ & $63, 62, 61, 60$ & $63, 62, 61, 60, 59, 58, 57$ \\
	& (4 weights) & (7 weights) \\
	\hline
	$[63, 6, 32]_2$ & $63, 62, 61, 60, 59$ & $63, 62, 61, 60, 59, 58, 57,56,55$ \\
	& (5 weights) & (9 weights) \\
	\hline
	$[63, 7, 31]_2$ & $61, 60, 59, 58$ & $61, 60, 59, 58, 57,56,55$ \\
	& (4 weights) & (7 weights) \\
	\hline
	$[64, 6, 32]_2$ & $63, 62, 61, 60$ & $63, 62, 61, 60, 59, 58, 57$ \\
	& (4 weights) & (7 weights) \\
	\hline
	$[64, 7, 32]_2$ & $63, 62, 61, 60, 59$ & $63, 62, 61, 60, 59,58,57,56,55$ \\
	& (5 weights) & (9 weights) \\
	\hline
	$[65, 7, 32]_2$ & $63, 62, 61, 60$ & $63, 62, 61, 60, 59, 58, 57$ \\
	& (4 weights) & (7 weights) \\
	\hline
	$[71, 5, 36]_2$ & $71, 70, 69, 68$ & $71, 70, 69, 68, 67, 66, 65$ \\
	& (4 weights) & (7 weights) \\
	\hline
	$[75, 5, 38]_2$ & $75, 74, 73, 72$ & $75, 74, 73, 72, 71, 70, 69$ \\
	& (4 weights) & (7 weights) \\
	\hline
	$[77, 5, 39]_2$ & $77, 76, 75, 74$ & $77, 76, 75, 74, 73, 72, 71$ \\
	& (4 weights) & (7 weights) \\
	\hline
	$[78, 5, 40]_2$ & $78, 77, 76, 75$ & $79, 78, 77, 76, 75, 74, 73, 72, 71$ \\
	& (4 weights) & (9 weights) \\
	\hline
	$[79, 5, 40]_2$ & $79, 78, 77, 76$ & $79, 78, 77, 76, 75, 74, 73$ \\
	& (4 weights) & (7 weights) \\
	\hline
	$[80, 6, 40]_2$ & $79, 78, 77, 76$ & $79, 78, 77, 76, 75, 74, 73$ \\
	& (4 weights) & (7 weights) \\
	\hline
	$[83, 5, 42]_2$ & $83, 82, 81, 80$ & $83, 82, 81, 80, 79, 78, 77$ \\
	& (4 weights) & (7 weights) \\
	\hline
	$[85, 5, 43]_2$ & $85, 84, 83, 82$ & $85, 84, 83, 82, 81, 80, 79$ \\
	& (4 weights) & (7 weights) \\
	\hline
	$[86, 5, 44]_2$ & $86, 85, 84, 83$ & $87, 86, 85, 84, 83, 82, 81, 80, 79$ \\
	& (4 weights) & (9 weights) \\
	\hline
	$[87, 5, 44]_2$ & $87, 86, 85, 84$ & $87, 86, 85, 84, 83, 82, 81$ \\
	& (4 weights) & (7 weights) \\
	\hline
	$[88, 6, 44]_2$ & $87, 86, 85, 84$ & $87, 86, 85, 84, 83, 82, 81$ \\
	& (4 weights) & (7 weights) \\
	\hline
	$[89, 5, 45]_2$ & $89, 88, 87, 86$ & $89, 88, 87, 86, 85, 84, 83$ \\
	& (4 weights) & (7 weights) \\
	\hline
	$[90, 5, 46]_2$ & $90, 89, 88, 87$ & $90, 89, 88, 87, 86, 85, 84, 83$ \\
	& (4 weights) & (8 weights) \\
	\hline
	$[91, 5, 46]_2$ & $91, 90, 89, 88$ & $91, 90, 89, 88, 87, 86, 85$ \\
	& (4 weights) & (7 weights) \\
	\hline
	$[92, 5, 47]_2$ & $92, 91, 90, 89$ & $93, 92, 91, 90, 89, 88, 87, 86, 85$ \\
	& (4 weights) & (9 weights) \\
	\hline
	$[92, 6, 46]_2$ & $91, 90, 89, 88$ & $91, 90, 89, 88, 87, 86, 85$ \\
	& (4 weights) & (7 weights) \\
	\hline
	$[93, 5, 48]_2$ & $93, 92, 91, 90$ & $95, 94, 93, 92, 91, 90, 89, 88, 87,86,85$ \\
	& (4 weights) & (11 weights) \\
	\hline
\end{longtable}

\begin{table}[h]
	\centering
	\caption{Comparison of excluded weight ranges for ternary linear codes: Chen-Xie bound vs. Singleton Exclusion Criterion}
	\begin{tabular}{|p{85.9pt}|p{95.4pt}|p{180.4pt}|}
		\hline
	\textbf{Code parameters} & \textbf{Excluded weights} & \textbf{Excluded weights} \\
	$[n,k,d]_3$ & \textbf{(Chen-Xie bound)} & \textbf{(Singleton  Criterion)} \\
		\hline
		$[27, 4, 18]_3$ & $26, 25$ & $26, 25, 24, 23, 22$ \\
		& (2 weights) & (5 weights) \\
		\hline
		$[36, 4, 24]_3$ & $35, 34$ & $35, 34, 33, 32, 31$ \\
		& (2 weights) & (5 weights) \\
		\hline
		$[80, 4, 54]_3$ & $80, 79, 78$ & $80, 79, 78, 77, 76, 75, 74, 73$ \\
		& (3 weights) & (8 weights) \\
		\hline
		$[81, 5, 54]_3$ & $80, 79, 78$ & $80, 79, 78, 77, 76, 75, 74, 73$ \\
		& (3 weights) & (8 weights) \\
		\hline
		$[107, 4, 72]_3$ & $107, 106, 105$ & $107, 106, 105, 104, 103, 102, 101, 100$ \\
		& (3 weights) & (8 weights) \\
		\hline
		$[108, 5, 72]_3$ & $107, 106, 105$ & $107, 106, 105, 104, 103, 102, 101, 100$ \\
		& (3 weights) & (8 weights) \\
		\hline
		$[116, 4, 78]_3$ & $116, 115, 114$ & $116, 115, 114, 113, 112, 111, 110, 109$ \\
		& (3 weights) & (8 weights) \\
		\hline
		$[117, 5, 78]_3$ & $116, 115, 114$ & $116, 115, 114, 113, 112, 111, 110, 109$ \\
		& (3 weights) & (8 weights) \\
		\hline
		$[119, 4, 80]_3$ & $119, 118, 117$ & $119, 118, 117, 116, 115, 114, 113, 112$ \\
		& (3 weights) & (8 weights) \\
		\hline
		$[120, 4, 81]_3$ & $120, 119, 118$ & $121, 120, 119, 118, 117, 116, 115, 114, 113, 112$ \\
		& (3 weights) & (10 weights) \\
		\hline
		$[120, 5, 80]_3$ & $119, 118, 117$ & $119, 118, 117, 116, 115, 114, 113, 112$ \\
		& (3 weights) & (8 weights) \\
		\hline
		$[121, 5, 81]_3$ & $120, 119, 118$ & $121, 120, 119, 118, 117, 116, 115, 114, 113, 112$ \\
		& (3 weights) & (10 weights) \\
		\hline
		$[134, 4, 90]_3$ & $134, 133, 132$ & $134, 133, 132, 131, 130, 129, 128, 127$ \\
		& (3 weights) & (8 weights) \\
		\hline
		$[143, 4, 96]_3$ & $143, 142, 141$ & $143, 142, 141, 140, 139, 138, 137, 136$ \\
		& (3 weights) & (8 weights) \\
		\hline
		$[146, 4, 98]_3$ & $146, 145, 144$ & $146, 145, 144, 143, 142, 141, 140, 139$ \\
		& (3 weights) & (8 weights) \\
		\hline
		$[147, 4, 99]_3$ & $147, 146, 145$ & $148, 147, 146, 145, 144, 143, 142, 141, 140, 139$ \\
		& (3 weights) & (10 weights) \\
		\hline
		$[152, 4, 102]_3$ & $152, 151, 150$ & $152, 151, 150, 149, 148, 147, 146, 145$ \\
		& (3 weights) & (8 weights) \\
		\hline
		$[155, 4, 104]_3$ & $155, 154, 153$ & $155, 154, 153, 152, 151, 150, 149, 148$ \\
		& (3 weights) & (8 weights) \\
		\hline
		$[162, 5, 108]_3$ & $161, 160, 159$ & $161, 160, 159, 158, 157, 156, 155, 154$ \\
		& (3 weights) & (8 weights) \\
		\hline
		$[189, 5, 126]_3$ & $188, 187, 186$ & $188, 187, 186, 185, 184, 183, 182, 181$ \\
		& (3 weights) & (8 weights) \\
		\hline
		$[198, 5, 132]_3$ & $197, 196, 195$ & $197, 196, 195, 194, 193, 192, 191, 190$ \\
		& (3 weights) & (8 weights) \\
		\hline
		$[201, 4, 135]_3$ & $201, 200, 199$ & $202, 201, 200, 199, 198, 197, 196, 195, 194, 193$ \\
		& (3 weights) & (10 weights) \\
		\hline
		$[201, 5, 134]_3$ & $200, 199, 198$ & $200, 199, 198, 197, 196, 195, 194, 193$ \\
		& (3 weights) & (8 weights) \\
		\hline
		$[202, 5, 135]_3$ & $201, 200, 199$ & $202, 201, 200, 199, 198, 197, 196, 195, 194, 193$ \\
		& (3 weights) & (10 weights) \\
		\hline
	\end{tabular}
\end{table}

\subsection{Excluded weights via Theorem \ref{secondbound}}
By employing residual code techniques in conjunction with the Griesmer bound,
we derive refined constraints on excluded weights.

\begin{Theorem}{\rm (Griesmer Exclusion Criterion)}\label{firstvanishing3}
Let $\mathcal{C}$ be an $[n,k,d]$ linear code over $\mathbb{F}_q$ and
let $w$ be a positive integer. If $w$ satisfies the conditions $w < \frac{qd}{q-1}$ and
\[ n < d + \left\lceil \frac{w}{q} \right\rceil + \sum_{i=1}^{k-2} \left\lceil \frac{d - w + \left\lceil \frac{w}{q} \right\rceil}{q^i} \right\rceil, \]
then $\mathcal{C}$ contains no codeword of weight $w$.
\end{Theorem}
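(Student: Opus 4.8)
The plan is to obtain this exclusion criterion as the logical contrapositive of the Residual-Griesmer bound (Theorem~\ref{secondbound}), which already does all of the structural work through the residual code. The statement asserts the \emph{nonexistence} of a codeword of weight $w$ under two hypotheses, so the natural route is to assume such a codeword exists and then force a violation of the second hypothesis.

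Concretely, I would proceed as follows. Suppose for contradiction that $\mathcal{C}$ contains a nonzero codeword $\mathbf{c}$ with $\mathrm{wt}(\mathbf{c}) = w$. The first hypothesis $w < \frac{qd}{q-1}$ is precisely the weight restriction demanded by Theorem~\ref{secondbound}, so that theorem applies verbatim to $\mathbf{c}$ and yields
\[
n \geq d + \left\lceil \frac{w}{q} \right\rceil + \sum_{i=1}^{k-2} \left\lceil \frac{d - w + \left\lceil \frac{w}{q} \right\rceil}{q^i} \right\rceil.
\]
This lower bound on $n$ stands in direct contradiction with the second hypothesis, which posits the strict reverse inequality $n < (\text{the same expression})$. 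Since both cannot hold simultaneously, no codeword of weight $w$ can exist, which is the claim.

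The argument is short because the heavy lifting — passing to $\mathrm{Res}(\mathcal{C},\mathbf{c})$, invoking Lemma~\ref{thm:residual} for the residual minimum distance $d' \ge d - w + \lceil w/q\rceil$, and applying the Griesmer bound — is already encapsulated in Theorem~\ref{secondbound}. Accordingly, there is no genuine analytic obstacle; the only point requiring care is the dimensional hypothesis. Theorem~\ref{secondbound} is established for $k \geq 2$, which is also exactly the range in which the displayed sum $\sum_{i=1}^{k-2}$ is well-formed (it is empty when $k=2$). I would therefore read Theorem~\ref{firstvanishing3} under the standing assumption $k \geq 2$; the degenerate case $k=1$, where every nonzero codeword has weight $d$, must be set aside, since there the inequality on $n$ can hold even though a weight-$d$ codeword is genuinely present (e.g.\ a repetition-type code). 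One should finally confirm that the $\ge$ produced by Theorem~\ref{secondbound} and the $<$ assumed here are literally incompatible, i.e.\ that the two right-hand sides are the identical integer-valued expression term by term — which they are — so that the contradiction is immediate and no rounding slack can intervene.
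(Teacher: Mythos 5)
Your proposal is correct and matches the paper's own proof, which simply states that the result follows directly from Theorem~\ref{secondbound} by contraposition. Your additional remark that the statement should be read under the standing assumption $k\ge 2$ (the hypothesis of Theorem~\ref{secondbound}) is a worthwhile clarification that the paper leaves implicit.
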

\begin{proof}
It follows directly from Theorem \ref{secondbound}.
\end{proof}

The following example illustrates that Theorem~\ref{firstvanishing3}
provides the most powerful criterion among Lemma~\ref{knownresult}
and Theorem~\ref{firstvanishing1} for determining excluded weights
of linear codes. A distinctive feature of Theorem~\ref{firstvanishing3},
which sets it apart from the other two results, is its ability to exclude
weights that do not necessarily lie within a consecutive interval.
To demonstrate this, we explicitly construct a binary linear code with parameters $[11,3,6]$,
whose set of nonzero weights is exactly $\{6,8\}$.
From the parameters $n,k,d,q$, Lemma~\ref{knownresult} guarantees the absence of codewords of weights $\{10,11\}$.
Applying Theorem~\ref{firstvanishing1} to the same parameters further excludes weights $\{9,10,11\}$.
Finally, Theorem~\ref{firstvanishing3} yields the vanishing set $\{7,9,10,11\}$.
This conclusion is optimal, since the remaining weights $\{6,8\}$ coincide precisely with the actual nonzero weights of the code.
Further details are provided in the following example.
\begin{Example}{\rm
Consider the binary linear code $\mathcal{C}$ with generator matrix
$$
G = \left[\begin{array}{ccccccccccc}
	1 & 1 & 1 & 1 & 1 & 1 & 1 & 1 & 0 & 0 & 0 \\
	1 & 1 & 1 & 1 & 0 & 0 & 0 & 0 & 1 & 1 & 0 \\
	1 & 1 & 0 & 0 & 1 & 1 & 0 & 0 & 1 & 0 & 1
\end{array}\right].
$$
It is straightforward to verify that $\mathcal{C}$ has parameters $[11,3,6]$ and consists of the following codewords:
$$
	\begin{aligned}
		\mathbf{0} = (0,0,0,0,0,0,0,0,0,0,0),
		&~~\mathbf{c}_1 = (1,1,1,1,1,1,1,1,0,0,0), \\
		\mathbf{c}_2 = (1,1,1,1,0,0,0,0,1,1,0),
		&~~\mathbf{c}_3 = (1,1,0,0,1,1,0,0,1,0,1), \\
		\mathbf{c}_4 = (0,0,0,0,1,1,1,1,1,1,0),
		&~~\mathbf{c}_5 = (0,0,1,1,0,0,1,1,1,0,1), \\
		\mathbf{c}_6 = (0,0,1,1,1,1,0,0,0,1,1),
		&~~\mathbf{c}_7 = (1,1,0,0,0,0,1,1,0,1,1).
	\end{aligned}
$$
	
It can be checked that $\mathcal{C}$ is a Griesmer code, i.e., it attains the Griesmer bound.
The only nonzero weights of $\mathcal{C}$ are $6$ and $8$.
We now analyze its excluded weights under three approaches:

- \textbf{Chen-Xie bound (Lemma \ref{knownresult})}:
  The condition $w > n - k + 1 = 9$ excludes weights $w = 10,11$.
	
- \textbf{Singleton Exclusion Criterion (Theorem \ref{firstvanishing1})}:
  The condition $w > q(n - k - d + 2)$ excludes weights $w = 9,10,11$.
	
- \textbf{Griesmer Exclusion Criterion  (Theorem \ref{firstvanishing3})}:
  The condition
  $$
  n < d + \left\lceil \frac{w}{q} \right\rceil + \sum_{i=1}^{k-2} \left\lceil \frac{d - w + \lceil \frac{w}{q} \rceil}{q^i} \right\rceil
  $$
  for $q=2$, $k=3$ specializes to
  $$
  11 < 6 + \left\lceil \frac{w}{2} \right\rceil + \left\lceil \frac{6 - w + \lceil \frac{w}{2} \rceil}{2} \right\rceil,
  $$
  which excludes weights $w = 7,9,10,11$.

This demonstrates a progressive expansion in the set of excluded weights:
$$
\text{Chen-Xie: } \{10,11\} \;\subsetneq\; \text{Singleton: } \{9,10,11\} \;\subsetneq\; \text{Griesmer: } \{7,9,10,11\}.
$$
Note that the actual code $\mathcal{C}$ indeed has no codewords of weight $7,9,10,11$.
Thus, the Griesmer Exclusion Criterion correctly identifies {\bf all and only} the  excluded weights, in perfect agreement with the true weight distribution.
}
\end{Example}

To further highlight the effectiveness of our Griesmer Exclusion Criterion
compared with both the Chen-Xie bound and our Singleton Exclusion Criterion, we evaluate the number of excluded weights identified by each method for several Griesmer codes listed in Table I of Chen and Xie's paper \cite{CX}. The comparison is summarized in the following table:
\begin{table}[h]
	\centering
	\small
	\caption{Comparison of three exclusion methods for Griesmer codes:
		Chen-Xie bound, Singleton Exclusion Criterion and Griesmer Exclusion Criterion}
	\begin{tabular}{|p{80.9pt}|p{1.8cm}|p{1.8cm}|p{7cm}|}
		\hline
			\textbf{Code parameters} & \textbf{Excluded weights} & \textbf{Excluded weights} & \textbf{Excluded weights}\\
		$[n,k,d]_3$ & \textbf{(Chen-Xie bound)} & \textbf{(Singleton  Criterion)} & \textbf{(Griesmer  Criterion)}\\
		\hline
		$[267,8,132]_2$ & 261-263 & 259-263 & 133-135, 167, 183, 191, 195, 197-199, 215, 223, 227, 229-231, 239, 243, 245-247, 251, 253-255, 257-263 \\
		& (3 weights) & (5 weights) & (32 weights) \\
		\hline
		$[271,8,134]_2$ & 265-267 & 263-267 & 135, 137-139, 171, 187, 195, 199, 201-203, 219, 227, 231, 233-235, 243, 247, 249-251, 255, 257-259, 261-267 \\
		& (3 weights) & (5 weights) & (33 weights) \\
		\hline
		$[274,8,136]_2$ & 268-271 & 265-271 & 137-143, 159, 167, 171, 173-175, 183, 187, 189-191, 195, 197-199, 201-207, 215, 219, 221-223, 227, 229-231, 233-239, 243, 245-247, 249-255, 257-271 \\
		& (4 weights) & (7 weights) & (71 weights) \\
		\hline
		$[279,8,138]_2$ & 273-275 & 271-275 & 139, 143, 145-147, 179, 195, 203, 207, 209-211, 227, 235, 239, 241-243, 251, 255, 257-259, 263, 265-267, 269-275 \\
		& (3 weights) & (5 weights) & (34 weights) \\
		\hline
		$[282,8,140]_2$ & 276-279 & 273-279 & 141-143, 145-151, 167, 175, 179, 181-183, 191, 195, 197-199, 203, 205-207, 209-215, 223, 227, 229-231, 235, 237-239, 241-247, 251, 253-255, 257-263, 265-279 \\
		& (4 weights) & (7 weights) & (79 weights) \\
		\hline
		$[286,8,142]_2$ & 280-283 & 277-283 & 143, 145-147, 149-155, 171, 179, 183, 185-187, 195, 199, 201-203, 207, 209-211, 213-219, 227, 231, 233-235, 239, 241-243, 245-251, 255, 257-259, 261-267, 269-283 \\
		& (4 weights) & (7 weights) & (83 weights) \\
		\hline
		$[289,8,144]_2$ & 283-287 & 279-287 & 145-159, 167, 171, 173-175, 179, 181-183, 185-191, 195, 197-199, 201-207, 209-215, 216-223, 227, 229-231, 233-239, 241-247, 248-255, 257-263, 264-271, 272-279, 280-287 \\
		& (5 weights) & (9 weights) & (143 weights) \\
		\hline
	\end{tabular}
\end{table}

\section{Conclusion and future works}
We introduced a family of \emph{$w$-aware bounds} for $q$-ary linear codes that leverage the existence of a prescribed nonzero codeword of weight $w$. By passing to residual codes with respect to that codeword and applying classical inequalities, we obtained:

\begin{itemize}
  \item  \textbf{Residual-Singleton bound}
  \[
  d \le n-k-\Big\lceil\tfrac{w}{q}\Big\rceil+2 \qquad \text{for } 0<w<\tfrac{qd}{q-1},
  \]
  \item  \textbf{Residual-Griesmer bound}
  \[
  n \ge d+\Big\lceil\tfrac{w}{q}\Big\rceil+\sum_{i=1}^{k-2}\Big\lceil \tfrac{\,d-w+\lceil \frac{w}{q}\rceil\,}{q^i}\Big\rceil,
  \qquad \text{for } 0<w<\tfrac{qd}{q-1},
  \]
  \item and an elementary \textbf{global weight bound} $w\le q(n-d)$, which yields $d\le \tfrac{q}{q+1}n$ upon taking $w=d$.
\end{itemize}

These inequalities give \emph{explicit vanishing windows} for the weight spectrum and sharpen structural constraints that are invisible to bounds depending only on $(n,k,d,q)$. In particular, for MDS codes we proved that any codeword with $d\le w<\tfrac{qd}{q-1}$ must satisfy $w\le q$, revealing a pronounced gap in their weight distributions. The bounds are tight on several standard families (e.g., Griesmer codes), and a detailed $[11,3,6]_2$ example shows that the Griesmer Exclusion Criterion can rule out \emph{non-consecutive} weights and exactly recover the true spectrum where the Chen-Xie bound only yields shorter consecutive intervals. Numerical comparisons across standard parameter sets confirm that the proposed bounds \emph{strictly enlarge} known excluded weight ranges.

While our approach is elementary---combining residual codes with classical bounds---it exposes a versatile mechanism for translating a single structural datum (one known weight $w$) into concrete, and often stronger, parameter trade-offs. We expect this perspective to be useful both for code design and for analyzing the combinatorial structure of linear codes and their duals.

\end{document}